
\documentclass[letterpaper, 10 pt, conference]{ieeeconf}  

\IEEEoverridecommandlockouts                              

\overrideIEEEmargins                                      

\usepackage[natbib=true,backend=biber,style=numeric-comp,sorting=none,maxbibnames=99]{biblatex}

\usepackage{graphicx,algorithm,algpseudocode, amsthm,amsmath,amsfonts,verbatim,mathtools,enumerate,bm,hyperref,dsfont,xcolor, caption, subcaption, rotating, tikz, pgfplots}
\usetikzlibrary{shapes, positioning}
\usepackage[export]{adjustbox}
\definecolor{preyred}{rgb}{0.8392156862745098, 0.15294117647058825, 0.1568627450980392}
\definecolor{predatorblue}{rgb}{0.12156862745098039, 0.4666666666666667, 0.7058823529411765}
\setcounter{MaxMatrixCols}{20}
\usepackage[capitalize]{cleveref}

\pdfminorversion=4
\allowdisplaybreaks

\newcommand*\octagoned[3]{\tikz[baseline=(char.base)]{
    \node[regular polygon, regular polygon sides=8, fill=#2, text=#3, inner sep=2pt] (char) {#1};}}


\newcommand{\diag}{\mathop{\rm diag}}
\newcommand{\blkdiag}{\mathop{\rm blkdiag}}

\newcommand{\rem}{\mathop{\rm rem}}
\newcommand{\quot}{\mathop{\rm quo}}
\newcommand{\vect}{\mathop{\rm vec}}


\newcommand{\argmin}{\mathop{\rm argmin}}



\newcommand{\norm}[1]{\left\lVert#1\right\rVert}
\newcommand{\mnorm}[1]{{\left\vert\kern-0.25ex\left\vert\kern-0.25ex\left\vert #1 
    \right\vert\kern-0.25ex\right\vert\kern-0.25ex\right\vert}}

\theoremstyle{definition}
\newtheorem{definition}{Definition} 
\newtheorem{theorem}{Theorem}

\usepackage{soul}

\bibliography{reference}

\setlength{\belowcaptionskip}{-0.5cm}
\DeclareCaptionFormat{myformat}{\small\itshape #1#2#3}
\captionsetup[figure]{format=myformat}

\title{\LARGE \bf Soft-Bellman Equilibrium in Affine Markov Games:\\ Forward Solutions and Inverse Learning
}

\author{Shenghui~Chen, Yue~Yu, David~Fridovich-Keil, and Ufuk~Topcu
\thanks{S. Chen, Y. Yu, D. Fridovich-Keil, and U. Topcu are with the Oden Institute for Computational Engineering and Sciences, The University of Texas at Austin, TX, 78712, USA (emails: shenghui.chen@utexas.edu,\, yueyu@utexas.edu,\, dfk@utexas.edu,\, utopcu@utexas.edu).}%
}

\begin{document}

\maketitle
\thispagestyle{empty}
\pagestyle{empty}

\begin{abstract}
Markov games model interactions among multiple players in a stochastic, dynamic environment. Each player in a Markov game maximizes its expected total discounted reward, which depends upon the policies of the other players. We formulate a class of Markov games, termed \emph{affine Markov games}, where an affine reward function couples the players' actions. We introduce a novel solution concept, the \emph{soft-Bellman equilibrium}, where each player is boundedly rational and chooses a soft-Bellman policy rather than a purely rational policy as in the well-known Nash equilibrium concept. We provide conditions for the existence and uniqueness of the soft-Bellman equilibrium and propose a nonlinear least-squares algorithm to compute such an equilibrium in the \emph{forward problem}. We then solve the \emph{inverse game problem} of inferring the players’ reward parameters from observed state-action trajectories via a projected-gradient algorithm. Experiments in a predator-prey OpenAI Gym environment show that the reward parameters inferred by the proposed algorithm outperform those inferred by a baseline algorithm: they reduce the Kullback-Leibler divergence between the equilibrium policies and observed policies by at least two orders of magnitude.
\end{abstract}

\section{Introduction}
\label{sec: introduction}

Markov games model the interaction of multiple decision makers in stochastic and dynamic environments \cite{shapley1953stochastic}. In a Markov game, each player's transition and reward depend on the policies of the other players, and each player aims to find an optimal policy that maximizes its expected discounted total reward.




The concept of Nash equilibrium, which refers to a set of policies where no player can benefit by unilaterally changing their policy \cite{shapley1953stochastic}, overlooks the reality that players are often boundedly rational. For example, humans have limited cognitive capacity and are subject to biases and heuristics that can affect their decision-making. As a result, the outcomes of games played by humans may not always align with the predictions from the Nash equilibrium concept. Recent efforts attempt to address this limitation by accounting for players' bounded rationality in games with specific structures, including matrix games \cite{waugh2013computational}, fully cooperative games \cite{barrett2017making, le2017coordinated, vsovsic2016inverse}, and two-player games \cite{lin2017multiagent, lin2019multi}. Another recent work tackles the same limitation in dynamic games with continuous state and action spaces \cite{mehr2023maximum}. However, to our best knowledge, no work has addressed this limitation in \emph{general-sum, multi-player} Markov games with \emph{discrete} state and action spaces yet.


We propose the \emph{soft-Bellman equilibrium} as a new solution concept to capture the dynamics of boundedly rational players in \emph{affine Markov games}.
Affine Markov games are a class of Markov games where each player has independent dynamics and an affine reward function couples the players’ actions.
In a soft-Bellman equilibrium, each player chooses a policy that maximizes the expected reward with causal entropy regularization while satisfying independent transition dynamics.
We provide conditions for the existence and uniqueness of the soft-Bellman equilibrium.


We study the \emph{forward problem} of computing a soft-Bellman equilibrium in a given affine Markov game.
We propose a least-squares-based algorithm to solve this problem by minimizing the residuals of the soft-Bellman equilibrium conditions.

We then turn to the \emph{inverse game problem} of inferring the players’ reward parameters that best explain observed interactions. We propose an iterative algorithm that leverages the solutions to the forward problem. In each iteration, the algorithm computes the soft-Bellman equilibrium given the current reward parameters and then updates those parameters with a projected-gradient method based on the implicit function theorem \cite{dontchev2014implicit}. 

The proposed inverse game algorithm outperforms a baseline algorithm that ignores the coupling between players. 
Experiments in a predator-prey OpenAI Gym environment \cite{magym} show that the reward parameters inferred by the proposed algorithm reduce the Kullback-Leibler divergence between the equilibrium policies and observed policies by at least two orders of magnitude than the baseline algorithm.
\section{Related Work}
\label{sec: related}
In single-agent settings, literature in inverse reinforcement learning studies the problem of inferring reward parameters from human experts' trajectories. The principle of maximum entropy is a popular approach in this direction \cite{ziebart2008maximum}. Subsequent studies further extend this principle to accommodate stochastic transitions using causal entropy \cite{ziebart2013principle}. For example, recent work extends the maximum causal entropy framework in inverse reinforcement learning to an infinite time horizon setting and proposes the concept of stationary soft-Bellman policy \cite{zhou2017infinite}. This policy concept inspires the formulation of the soft-Bellman equilibrium to account for the players' bounded rationality, a feature lacking in the Nash equilibrium concept.

In multi-agent settings, most existing works that try to address the limitation of Nash equilibrium assume specific game structures, including matrix games \cite{waugh2013computational}, fully cooperative games \cite{barrett2017making, le2017coordinated, vsovsic2016inverse}, two-player zero-sum games \cite{lin2017multiagent}, and two-player general-sum games \cite{lin2019multi}.
This paper generalizes the existing works to \emph{multi-player, general-sum} Markov games.

First formulated in normal-form and extensive-form games, the quantal response equilibrium is a solution concept to model the bounded rationality of human players \cite{mckelvey1995quantal, mckelvey1998quantal}.
Inspired by this solution concept, recent work proposes the entropic cost equilibrium to extend the quantal response equilibrium to games with \emph{continuous} states and actions \cite{mehr2023maximum}. 

The current work, on the other hand, proposes the soft-Bellman equilibrium to support stochastic transitions in affine Markov games with \emph{discrete} state and action spaces. 
Although both the entropic cost equilibrium and the soft-Bellman equilibrium extend the quantal response equilibrium, the soft-Bellman equilibrium is different in choosing the state-action frequency matrix, instead of the policy, as the variable to optimize for each player. This subtle difference changes the expected reward from a nonconvex function to a convex one, laying the groundwork for establishing conditions that ensure the existence and uniqueness of solutions.


\section{Models}
We present our main theoretical models: a special class of Markov games, along with a novel equilibrium concept that accounts for bounded rationality.

\subsection{Affine Markov Games}
We consider a Markov game \cite{shapley1953stochastic} where each player solves an MDP with independent dynamics and an affine reward function that couples the players' actions. We let \(p\in\mathbb{N}\) denote the number of players. Player \(i\in[p]\) solves an MDP specified by a tuple which includes a set of states, a set of actions, a transition kernel, an initial state distribution, a reward matrix, and a discount factor. We let \(n^i\in\mathbb{N}\) and \(m^i\in\mathbb{N}\) denote the number of states and actions for player \(i\), respectively. We let \(S_t^i\in[n^i]\) and \(A_t^i\in[m^i]\) denote the state and action of player \(i\) at time \(t\in\mathbb{N}\). Each action triggers a stochastic transition between the current state to the next state. We let \(T^i\in\mathbb{R}^{n^i\times m^i\times n^i}\) denote the \emph{transition kernel} of player \(i\) such that
\begin{equation}\label{eqn: transition kernel}
    T_{saj}^i\coloneqq \mathds{P}(S_{t+1}^i=j|S_t^i=s, A_t^i=a)
\end{equation}
for all \(t\in\mathbb{N}\), \(s, j\in[n^i]\) and \(a\in[m^i]\). We let \(q^i\in\mathbb{R}^{n^i}\) denote the \emph{initial state distribution} of player \(i\) such that
\begin{equation}\label{eqn: init dist}
    q_s^i\coloneqq \mathds{P}(S_0^i=s)
\end{equation}
for all \(s\in[n^i]\). We let \(R^i\in\mathbb{R}^{n^i\times m^i}\) denote the reward matrix, where
\(R_{sa}^i\) denotes the reward of player \(i\) for choosing choosing action \(a\) in state \(s\). Finally, we let \(\gamma\in[0, 1)\) denote a reward discount factor. For each player \(i\in[p]\), a {stationary policy} maps each state to a probability distribution over actions. We denote such a policy as a matrix \(\Pi^i\in\mathbb{R}^{n^i\times m^i}\) where
\begin{equation}\label{eqn: policy}
     \Pi_{sa}^i\coloneqq \mathds{P}(A_t^i=a| S_t^i=s)
\end{equation}
for all \(t\in\mathbb{N}, s\in[n^i], a\in[m^i]\). An optimal stationary policy in an MDP minimizes the following expected total discounted state-action reward
\begin{equation}\label{eqn: reward sum}
    \sum_{t=0}^\infty \sum_{s=1}^n\sum_{a=1}^m\gamma^t \mathds{P}(S_t^i=s, A_t^i=a) R_{sa}^i.
\end{equation}
We let \(Y^i\in\mathbb{R}^{n^i\times m^i}\) denote the state-action frequency matrix of player \(i\in[p]\) such that
\begin{equation}\label{eqn: state-action frequency}
    Y_{sa}^i\coloneqq\sum_{t=0}^\infty \gamma^t \mathds{P}(S_t^i=s, A_t^i=a).
\end{equation}
for all \(s\in[n^i]\) and \(a\in[m^i]\). 

We now introduce the definition of a \(p\)-player affine Markov game. 

\begin{definition}
     A \(p\)-player affine Markov game is a collection of MDPs \(\{\mathcal{M}^i=\{[n^i], [m^i], q^i, T^i, R^i, \gamma\}\}_{i=1}^p\) such that there exists \(b^i\in\mathbb{R}^{m^in^i}\) and \(C^{ij}\in\mathbb{R}^{m^in^i\times m^jn^j}\) for each \(i, j\in[p]\) such that 
\begin{equation}\label{eqn: game reward}
\vect(R^i)=b^i+\sum_{j=1}^p C^{ij} \vect(Y^j)
\end{equation}
for all \(i\in[p]\), where \(Y^i\in\mathbb{R}^{m^i\times n^i}\) satisfies \eqref{eqn: state-action frequency}.
\end{definition}

The affine reward structure in \eqref{eqn: game reward} couples different players' decisions together: the reward for a player is not a fixed number, but depends on the other players' state-action frequencies. Similar coupling appears in matrix games where each player has a finite number of candidate options \cite{rosen1965existence,yu2022inverse}. This function has two parameters: $b$ pertains to the individual player, and $C$ considers the coupling between the players.

\subsection{Soft-Bellman Equilibrium}

We now introduce the notion of \emph{soft-Bellman  equilibrium}. 
It extends the notion of quantal response equilibrium in games with deterministic dynamics to Markov games with stochastic dynamics \cite{mckelvey1995quantal, mckelvey1998quantal}.
Unlike Nash equilibrium, it states that all players choose a soft-Bellman policy---rather than the optimal policy that satisfies the Bellman equations---given other players' actions.

\begin{definition}\label{def: soft-Bellman equilibrium}
Let \(\{\mathcal{M}^i=\{[n^i], [m^i], q^i, T^i, R^i, \gamma\}\}_{i=1}^p\) be an affine Markov game. Let \(\Pi^i\in\mathbb{R}^{n^i\times m^i}\) 
be a stationary policy matrix of player \(i\in[p]\). If there exists \(v^i\in\mathbb{R}^{n^i}\) and \(Q^i\in\mathbb{R}^{n^i\times m^i}\) such that
\begin{subequations}
\begin{align}
    \Pi_{sa}^i&=\frac{\exp( Q_{sa}^i)}{\sum_{j=1}^{m^i}\exp(Q_{sj}^i)},\label{eqn: soft policy} \\
Q_{sa}^i&=R_{sa}^i+\gamma \sum_{j=1}^{n^i} T_{saj}^iv_j^i,\label{eqn: soft Q} \\
v_s^i &= \log\left(\sum_{a=1}^{m^i} \exp(Q_{sa}^i)\right),\label{eqn: soft V}
\end{align}
\end{subequations}
for all \(s\in[n^i]\) and \(a\in[m^i]\), then \(\{\Pi^i\}_{i=1}^p\), is a \emph{soft-Bellman equilibrium} for \(\{\mathcal{M}^i\}_{i=1}^p\).  
\end{definition}

Previous studies have proposed similar notions of equilibrium, e.g., Markov quantal response equilibrium in \cite{eibelshauser2019markov}. However, unlike \Cref{eqn: soft V} in \cref{def: soft-Bellman equilibrium}, their formulations are inconsistent with the characterization of soft-Bellman policies. There is a close connection between the soft-Bellman equilibrium and the following optimization over state-action frequency matrix:
\begin{equation}\label{opt: best response}
    \begin{array}{ll}
 \underset{Y\in\mathbb{R}^{n^i\times m^i}}{\mbox{maximize}}  & \ell^i(Y)+h(Y)\\
      \mbox{subject to}  & \sum\limits_{a=1}^{m^i}Y_{sa}=q_s^i + \gamma \sum\limits_{j=1}^{n^i}\sum\limits_{a=1}^{m^i} T_{jas}^i Y_{ja},\enskip s\in[n^i],
      \end{array}
\end{equation} 
where
\begin{subequations}
    \begin{align}
        &\begin{aligned}
\ell^i(Y) &\coloneqq \vect(Y)^\top b^i+\frac{1}{2}\vect(Y)^\top C^{ii} \vect(Y)\\
&+\sum_{\substack{j=1, j\neq i}}^p \vect(Y^j)^\top C^{ij}\vect(Y),
\end{aligned} \label{eqn: potential}\\
 &h(Y)\coloneqq\sum_{s=1}^n\sum_{a=1}^mY_{sa}\left(\log\left(\sum_{j=1}^m Y_{sj}\right)-\log(Y_{sa})\right).\label{eqn: entropy} 
    \end{align}
\end{subequations}

The following theorem shows that, if each player chooses its policy by solving optimization~\eqref{opt: best response}, then the resulting policies form a soft-Bellman  equilibrium.  
\begin{theorem}\label{lem: soft-Bellman equilibrium}
Let \(\{\mathcal{M}^i=\{[n^i], [m^i], q^i, T^i, R^i, \gamma\}\}_{i=1}^p\) be an affine Markov game. Suppose that \(C^{ii}\preceq 0\) and \(Y^i\in\mathbb{R}_{>0}^{n^i\times m^i}\) is an optimal solution of optimization~\eqref{opt: best response} for all \(i\in[p]\). Let \(\Pi^i\in\mathbb{R}^{n^i\times m^i}\) be such that 
\begin{equation}\label{eqn: Y2Pi}
\Pi_{sa}^i=
\frac{Y_{sa}^i}{\sum_{j=1}^{m^i} Y_{sj}^i}
\end{equation}
for all \(i\in[p]\), \(s\in[n^i]\), and \(a\in[m^i]\). Then \(\{\Pi^i\}_{i=1}^p\) is a soft-Bellman equilibrium for \(\{\mathcal{M}^i\}_{i=1}^p\).
\end{theorem}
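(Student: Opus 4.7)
The plan is to derive the KKT optimality conditions of the concave program \eqref{opt: best response} for each player $i$ and match them term-by-term with the three soft-Bellman equations in \cref{def: soft-Bellman equilibrium}. First I would establish that, with the other players' state-action frequencies held fixed, the objective is concave in $Y = Y^i$: the hypothesis $C^{ii}\preceq 0$ makes the quadratic part of $\ell^i$ concave, the cross-coupling term is affine in $Y$ (since $Y^j$ is fixed for $j\neq i$), and $h(Y)$ is concave as it may be rewritten as $\sum_s y_s\,H(\Pi_{s,\cdot})$ with $y_s=\sum_a Y_{sa}$ and $\Pi_{sa}=Y_{sa}/y_s$, i.e., as the perspective of the Shannon entropy. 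Because the equality constraints are affine and $Y^i>0$ lies in the relative interior of the feasible set, the KKT conditions are necessary and sufficient for optimality.

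Next I would form the Lagrangian with a multiplier $v_s^i\in\mathbb{R}$ attached to the flow constraint at each state $s$ and compute stationarity at $Y^i$. Direct differentiation gives $\partial\ell^i/\partial Y_{sa}=R_{sa}^i$ by invoking \eqref{eqn: game reward}, $\partial h/\partial Y_{sa}=\log(y_s^i/Y_{sa}^i)$, and a constraint-Jacobian contribution $-v_s^i+\gamma\sum_{s'}T_{sas'}^i v_{s'}^i$. Defining $Q_{sa}^i\coloneqq R_{sa}^i+\gamma\sum_{s'}T_{sas'}^i v_{s'}^i$ immediately recovers \eqref{eqn: soft Q}, and the stationarity equation rearranges to
\[
\log \Pi_{sa}^i \;=\; Q_{sa}^i - v_s^i,
\]
after substituting $\Pi_{sa}^i=Y_{sa}^i/y_s^i$ from \eqref{eqn: Y2Pi}. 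Exponentiating and summing over $a$, together with the identity $\sum_a \Pi_{sa}^i=1$, yields $v_s^i=\log\sum_a \exp(Q_{sa}^i)$, which is \eqref{eqn: soft V}; substituting this $v_s^i$ back into the display then delivers \eqref{eqn: soft policy}. All three equations of \cref{def: soft-Bellman equilibrium} thus hold simultaneously.

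The main obstacle is bookkeeping the gradient of $\ell^i$: the vectorization in \eqref{eqn: potential} must be unpacked carefully so that the cross-coupling terms differentiate to $\sum_{j\neq i}C^{ij}\vect(Y^j)$, matching \eqref{eqn: game reward}, rather than to a transposed expression; this amounts to a convention on which argument of the bilinear form $\vect(Y^j)^\top C^{ij}\vect(Y)$ the gradient is taken with respect to. Beyond this, the derivation is mechanical, and no compactness or Slater-type argument is required because the theorem asserts the existence of an interior optimum $Y^i>0$, so KKT at that point is the only ingredient needed.
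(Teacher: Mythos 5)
Your proposal is correct and follows essentially the same route as the paper: write the KKT stationarity conditions of the concave program \eqref{opt: best response} at the interior point $Y^i>0$, identify the multiplier of the flow constraint with $v^i$ and the resulting stationarity equation with $\log\Pi^i_{sa}=Q^i_{sa}-v^i_s$, then exponentiate and normalize to recover \eqref{eqn: soft policy}--\eqref{eqn: soft V}. Your remark about the transpose convention in differentiating $\vect(Y^j)^\top C^{ij}\vect(Y)$ is a point the paper's proof silently assumes when asserting $\partial_{Y_{sa}}\ell^i(Y)=R^i_{sa}$, so flagging it is a minor improvement rather than a deviation.
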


\begin{proof}
First, \(h(Y)\) is a concave function of matrix \(Y\) \cite{bloem2014infinite,zhou2017infinite}, and \(\ell^i(Y)\) is also a concave function of \(Y\) since \(C^{ii}\preceq 0\). Next, by applying the chain rule to \eqref{eqn: potential} and \eqref{eqn: entropy} we can show the following:
\begin{equation*}
     \textstyle \partial_{Y_{sa}}\ell^i(Y)= R_{sa}^i,\enskip \partial_{Y_{sa}} h(Y)  = \log\left(\sum_{j=1}^m Y_{sj}\right)-\log (Y_{sa}),
\end{equation*}
where \(R^i\in\mathbb{R}^{n^i\times m^i}\) satisfies \eqref{eqn: game reward}. Since \(Y\in\mathbb{R}^{n^i\times m^i}_{>0}\) is an optimal solution for optimization~\eqref{opt: best response}, there exists \(v^i\in\mathbb{R}^{n^i}\) such that the following Karush-Kuhn-Tucker conditions hold:\begin{subequations}
\begin{align}
& \sum\limits_{a=1}^{m^i}Y_{sa}^i= q_s^i + \gamma \sum\limits_{j=1}^{n^i}\sum\limits_{a=1}^{m^i} T_{jas}^i Y_{ja}^i,\\
& \log (Y_{sa}^i)-\log\left(\sum_{j=1}^{m^i} Y_{sj}^i\right)=Q_{sa}^i-v_s^i,\label{eqn: kkt dual}
\end{align}
\end{subequations}
for all \(s\in[n^i]\) and \(a\in[m^i]\), where \(Q_{sa}^i\) is given by \eqref{eqn: soft Q}. Let \(\Pi_{sa}^i\) be given by \eqref{eqn: Y2Pi}, then \eqref{eqn: kkt dual} implies that
\begin{subequations}
    \begin{align}
       \Pi_{sa}^i&=\exp(Q_{sa}^i-v_s^i),\label{eqn: pi a}\\
       1=\sum_{j=1}^{m^i} \Pi_{sj}^i&=\sum_{j=1}^{m^i}\exp(Q_{sj}^i-v_s^i), \label{eqn: pi b}
    \end{align}
\end{subequations}
for all \(s\in[n^i]\) and \(a\in[m^i]\). By combining \eqref{eqn: pi a} with \eqref{eqn: pi b} one can obtain the condition in \eqref{eqn: soft policy}. Finally, multiplying both sides of \eqref{eqn: pi b} by \(\exp(v_s^i)\) gives \eqref{eqn: soft V}, which completes the proof.
\end{proof}

\section{Forward Solution via Nonlinear Least-squares} 
We now establish the existence and uniqueness of a soft-Bellman equilibrium, followed by a discussion on how to compute it by solving a nonlinear least-squares problem.
To this end, we introduce the following notation:
\begin{equation}
\begin{aligned}
    l& \coloneqq \sum_{i=1}^p m^in^i, \enskip b  \coloneqq \begin{bmatrix}
    (b^1)^\top & (b^2)^\top & \cdots & (b^p)^\top
    \end{bmatrix}^\top, \\
    r &\coloneqq \sum_{i=1}^p n^i,\enskip q  \coloneqq \begin{bmatrix}
    (q^1)^\top & (q^2)^\top & \cdots & (q^p)^\top
    \end{bmatrix}^\top.
\end{aligned} 
\end{equation}
Let matrices \(D^i, E^i\in\mathbb{R}^{n^i\times m^in^i}\) be such that
\begin{equation}\label{eqn: incidence game}
D^i= I_{n^i}\otimes (\mathbf{1}_{m^i}^\top), \enskip E^i_{kj} = T_{\quot(j, m)+1,\rem(j, m), k}^i,
\end{equation}
for all \(k\in[n^i]\) and \(j\in[m^in^i]\). Furthermore, let
\begin{equation}
\begin{aligned}
    H & \coloneqq \blkdiag(D^1-\gamma E^1, D^2-\gamma E^2, \ldots, D^p-\gamma E^p),\\
    K & \coloneqq \blkdiag( (D^1)^\top D^1,  (D^2)^\top D^2, \ldots, (D^p)^\top D^p),\\
    C & \coloneqq \begin{bmatrix}
    C^{11} & C^{12} & \cdots & C^{1p}\\
    C^{21} & C^{22} & \cdots & C^{2p}\\
    \vdots & \vdots & \ddots & \vdots\\
    C^{p1} & C^{p2} & \cdots & C^{pp}\\
    \end{bmatrix}.
\end{aligned}
\end{equation}

With these notations, we are ready to establish the following results on the existence and uniqueness of the soft-Bellman equilibrium. 

\begin{theorem}\label{thm: existence & uniqueness}
    Let \(\{\mathcal{M}^i=\{[n^i], [m^i], q^i, T^i, R^i, \gamma\}\}_{i=1}^p\) be an affine Markov game where \(C^{ii}\preceq 0\) for all \(i\in[p]\). Then \(Y^i\in\mathbb{R}^{n^i\times m^i}_{>0}\) is an optimal solution of optimization~\eqref{opt: best response} for all \(i\in[p]\) if and only if there exists \(v\in\mathbb{R}^r\) such that
    \begin{equation}\label{eqn: nl eqn}
        \begin{aligned}
         \log(y) &= \log(K y) +b+Cy-H^\top v,\\
             Hy&=q,
        \end{aligned}
    \end{equation}
    where 
    \begin{equation}
        y = \begin{bmatrix}
        \vect(Y^1)^\top & \vect(Y^2)^\top & \cdots & \vect(Y^p)^\top
        \end{bmatrix}^\top.
    \end{equation}
    Furthermore, there exists \(y\in\mathbb{R}^{l}\) such that \eqref{eqn: nl eqn} holds for some \(v\in\mathbb{R}^{r}\). If \(C+C^\top\preceq 0\), then such a \(y\) is unique.
\end{theorem}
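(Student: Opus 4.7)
My plan is to prove the three assertions in order: equivalence with the KKT system, existence, and uniqueness.

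For the equivalence, I would lift the per-player KKT analysis from the proof of \Cref{lem: soft-Bellman equilibrium} to the stacked vector $y$. That proof already shows, under $C^{ii}\preceq 0$, that a positive $Y^i$ maximizes \eqref{opt: best response} iff there exists $v^i\in\mathbb{R}^{n^i}$ satisfying the flow-balance equation and the soft-policy identity \eqref{eqn: kkt dual}. Substituting $\vect(R^i)=b^i+\sum_j C^{ij}\vect(Y^j)$ from \eqref{eqn: game reward} and translating each per-player equation into the block notation of \eqref{eqn: incidence game} produces \eqref{eqn: nl eqn} verbatim. The reverse direction upgrades a KKT point to a global maximizer via concavity of the objective, guaranteed by $C^{ii}\preceq 0$ together with concavity of $h$.

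For existence, I would apply Brouwer's fixed-point theorem to the joint best-response map $F$ whose $i$-th block sends $(Y^1,\ldots,Y^p)$ to the unique maximizer of player $i$'s problem \eqref{opt: best response} with the remaining frequencies held fixed. Each feasible set $\mathcal{F}^i=\{Y^i\geq 0 : H^i\vect(Y^i)=q^i\}$ is convex and, by the standard discounted-flow identity $\|Y^i\|_1=1/(1-\gamma)$, compact. Strict concavity of $h$ in the relative interior combined with $C^{ii}\preceq 0$ forces each best response to be single-valued and interior, and Berge's maximum theorem supplies its continuity. Brouwer then yields a fixed point, whose per-player KKT conditions assemble into \eqref{eqn: nl eqn}.

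For uniqueness under $C+C^\top\preceq 0$, suppose $y,y'$ both satisfy \eqref{eqn: nl eqn} with multipliers $v,v'$. Subtracting the first equations of \eqref{eqn: nl eqn}, pairing with $y-y'$, and using $H(y-y')=0$ to eliminate the multiplier term, gives
\begin{equation*}
(y-y')^\top\bigl[(\log y-\log y')-(\log Ky-\log Ky')\bigr]=\tfrac{1}{2}(y-y')^\top(C+C^\top)(y-y')\leq 0.
\end{equation*}
The left-hand side is the Bregman divergence induced by the convex function $-h$, hence nonnegative, so both sides vanish. The log-sum inequality applied row by row then forces the normalized rows $Y^i_{sa}/\sum_b Y^i_{sb}$ to coincide across $y$ and $y'$; that is, the equilibrium policies agree. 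With matching policies, $Hy=q$ reduces to the standard discounted stationary-distribution equation, a contractive linear system (since $\gamma<1$), whose solution is unique, so $y=y'$. The hardest step is this last one: the vanishing Bregman divergence alone determines only the conditional distributions (because $-h$ fails to be strictly convex along scaling directions), so the contractive-flow argument is essential rather than decorative.
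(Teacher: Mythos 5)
Your proposal is correct, and while the KKT-equivalence step matches the paper's, your existence and uniqueness arguments take a genuinely different route. The paper disposes of both claims in one line by observing that the stacked system \eqref{eqn: nl eqn} characterizes a Nash equilibrium of a $p$-player concave game and citing Rosen's existence/uniqueness theorem for diagonally strictly concave games. You instead re-derive existence from first principles (Brouwer on the joint best-response map over the compact sets $\|Y^i\|_1=1/(1-\gamma)$, with Berge supplying continuity) and prove uniqueness by a direct monotonicity computation: pairing the difference of the stationarity conditions with $y-y'$, killing the multipliers via $H(y-y')=0$, and concluding that the symmetrized Bregman divergence of $-h$ vanishes. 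Your route is longer but buys something real: $h$ is positively homogeneous of degree one, so it is \emph{not} strictly concave along per-row scaling directions, and the paper's appeal to ``strict concavity of the logarithm'' quietly skips the step showing that no such scaling direction is tangent to the feasible set $\{(D^i-\gamma E^i)\vect(Y^i)=q^i\}$; your final contraction argument ($\rho=q+\gamma P^\top\rho$ with $\gamma<1$) supplies exactly that missing piece, and you correctly flag it as the crux. Two small points to tighten: the single-valuedness and interiority of each best response, which you invoke before proving anything, rest on the very same tangency fact you only establish in the uniqueness step (so state it once, up front), and interiority additionally requires every state to have positive occupancy under every policy --- an assumption both you and the paper leave implicit, since otherwise $\log(y)$ in \eqref{eqn: nl eqn} is undefined.
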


\begin{proof}
First of all, the conditions \eqref{eqn: nl eqn} are the union of the KKT conditions for optimization~\eqref{opt: best response} for all \(i\in[p]\). Due to the assumption that \(C^{ii}\preceq 0\), \(C+C^\top \preceq 0\), and the strict concavity of logarithm function, one can verify that \(Y^i\in\mathbb{R}^{m^i\times n^i}\) is an optimal solution of optimization~\eqref{opt: best response} for all \(i\in[p]\) if and only if \(\{Y^i\}_{i=1}^p\) is a Nash equilibrium of a \(p\)-player diagonally strictly concave game, which exists and is unique \cite{rosen1965existence}.  
\end{proof}

As a result of \Cref{thm: existence & uniqueness}, one can compute a soft-Bellman equilibrium by solving the following nonlinear least-squares problem:
\begin{equation}\label{opt: nl ls}
\begin{array}{ll}
    \underset{y, v}{\mbox{minimize}} & \norm{\log(Ky)+b+Cy-H^\top v-\log(y)}^2\\&+\norm{Hy-q}^2
\end{array}
\end{equation}
Notice that the optimal value of the above optimization is zero, since there exists at least one solution for the nonlinear equations in \eqref{eqn: nl eqn}. 

\section{Inverse Learning via Implicit Differentiation}

Given the parameters of an affine Markov game, one can compute a soft-Bellman equilibrium of this game by solving the nonlinear least-squares problem in \eqref{opt: nl ls}. The question remains, however, of how to infer these parameters such that they best explains observed decisions, a problem also known as the \emph{inverse game}. Next, we answer this question by developing a projected-gradient method for parameter calibration.  

The inverse game problem is a parameter optimization problem defined as follows. We start with a set of empirically observed equilibrium state-action frequencies, 
\begin{equation}
    \hat{Y}^1, \hat{Y}^2, \ldots, \hat{Y}^p,
\end{equation}
where \(\hat{Y}^i_{sa}\in\mathbb{R}^{m^i\times n^i}\) denotes the empirical probability for player \(i\) to choose action \(a\) in state \(s\). Let 
\begin{equation}\label{eqn: observation}
    \hat{y}\coloneqq \begin{bmatrix}
    \vect(\hat{Y}^1)^\top & \vect(\hat{Y}^2)^\top & \cdots & \vect(\hat{Y}^p)^\top
    \end{bmatrix}^\top. 
\end{equation}
To find the best parameters that explain the observed state-action frequency matrices in \eqref{eqn: observation}, one can solve the following optimization problem
\begin{equation}\label{opt: bilevel}
    \begin{array}{ll}
    \underset{y, v, b, C}{\mbox{minimize}} &  \norm{y-\hat{y}}^2\\
    \mbox{subject to} &  \log(y) = \log(K  y) +b+Cy-H^\top v,\\
             & Hy=q, \enskip b\in\mathbb{B}, \enskip C\in\mathbb{D},
    \end{array}
\end{equation}
where \(\mathbb{B}\subset\mathbb{R}^l\) and \(\mathbb{D}\subset\mathbb{R}^{l\times l}\) are closed convex constraint sets for vector \(b\) and matrix \(C\), respectively.

Solving problem \eqref{opt: bilevel} is numerically challenging because this optimization contains both nonlinear equation constraints and possible positive semi-definite cone constraints in set \(\mathbb{D}\). As a remedy, we propose an approximate projected-gradient method that combines nonlinear equation solving with efficient projections. To this end, we let
\begin{equation}
    J=\begin{bmatrix}
    K\diag(Ky)^{-1} -\diag(y)^{-1}+C & -H^\top\\
    H & 0_{r\times r}.
    \end{bmatrix}
\end{equation}

By using the chain rule and the implicit function theorem \cite{dontchev2014implicit} one can show that, if \eqref{eqn: nl eqn} holds and matrix \(J\) is nonsingular, then 
\begin{subequations}
\begin{align}
    \partial_b \norm{y-\hat{y}}^2&=-2\begin{bmatrix}
    (y-\hat{y})^\top & 0_{r}
    \end{bmatrix}J^{-1} \begin{bmatrix}
    I_l\\
    0_{r\times l}
    \end{bmatrix},\\
    \partial_{C_j} \norm{y-\hat{y}}^2&=-2y_j\begin{bmatrix}
    (y-\hat{y})^\top & 0_{r}
    \end{bmatrix}J^{-1} \begin{bmatrix}
    I_l\\
    0_{r\times l}
    \end{bmatrix},
\end{align}
\end{subequations}
for all \(j\in[l]\), where \(C_j\in\mathbb{R}^l\) is the \(j\)-th column of matrix \(C\). Hence one can compute the approximate gradient for vector \(b\) and matrix \(C\) that locally decreases the value of the objective function in \eqref{opt: bilevel} as follows:
\begin{subequations}\label{eqn: approx grad}
    \begin{align}
        \tilde{\nabla}_b \norm{y-\hat{y}}^2 & \coloneqq -2\begin{bmatrix}
    I_l & 0_{l\times r}
    \end{bmatrix}(J^\dagger)^\top \begin{bmatrix}
    y-\hat{y}\\
    0_{r}
    \end{bmatrix},\label{eqn: b grad}\\
    \tilde{\nabla}_C \norm{y-\hat{y}}^2 & \coloneqq -2\begin{bmatrix}
    I_l & 0_{l\times r}
    \end{bmatrix}(J^\dagger)^\top \begin{bmatrix}
    y-\hat{y}\\
    0_{r}
    \end{bmatrix}y^\top.\label{eqn: C grad} 
    \end{align}
\end{subequations}
Notice that we approximate \(J^{-1}\) with the Moore-Penrose pseudoinverse \(J^\dagger\) in \eqref{eqn: approx grad}. Such an approximation is exact if \(J\) is nonsingular, and still well-defined even if \(J\) is singular or \(J^{-1}\) is numerically unstable to compute.

Based on the formulas in \eqref{eqn: approx grad}, we propose an approximate projected-gradient method, summarized in Algorithm~\ref{alg: proj grad}, to solve optimization~\eqref{opt: bilevel}, where we let
\begin{subequations}
\begin{align}
    \text{Proj}_{\mathbb{B}}(b)  & \coloneqq \underset{z\in\mathbb{B}}{\argmin} \norm{z-b},\\
    \text{Proj}_{\mathbb{D}}(C)  & \coloneqq \underset{X\in\mathbb{D}}{\argmin} \norm{X-C}_F,
\end{align}  
\end{subequations}
for all \(b\in\mathbb{R}^l\) and \(C\in\mathbb{R}^{l\times l}\). 
Each iteration of this method first solves the nonlinear least-squares problem in \eqref{opt: nl ls}, then performs a projected-gradient step on \(b\) and \(C\).  

\begin{algorithm}[!ht]
\caption{Approximated projected-gradient method.}
\begin{algorithmic}[1]
\Require  Step size \(\alpha\in\mathbb{R}_{>0}\), number of iterations \(k_{\max}\in\mathbb{N}\), random initial parameters $b_{init}\in\mathbb{R}^{l}, C_{init}\in\mathbb{R}^{l\times l}$, tolerance \(\epsilon\in\mathbb{R}\).
\State Initialize \(k=1\), \(b=b_{init}\), \(C=C_{init}\).
\While{\(k<k_{\max}\)}
\State Solve optimization \eqref{opt: nl ls} for \(y\).
\If{change in $\norm{y-\hat{y}}^2 < \epsilon$}
\State terminate.
\EndIf
\State \(b\gets \text{Proj}_{\mathbb{B}}(b-\alpha \tilde{\nabla}_b\norm{y-\hat{y}}^2)\) \Comment{cf. \eqref{eqn: b grad}} \label{lst:line:b_update}
\State \(C\gets \text{Proj}_{\mathbb{D}}(C-\alpha \tilde{\nabla}_C\norm{y-\hat{y}}^2)\) \Comment{cf. \eqref{eqn: C grad}} \label{lst:line:C_update}
\State \(k\gets k+1\)
\EndWhile
\Ensure Vector \(b\) and matrix \(C\).
\end{algorithmic}
\label{alg: proj grad}
\end{algorithm}

\section{Experiments}
\label{sec: experiments}
We evaluate the performance of the proposed algorithm against a baseline algorithm that neglects the fact that players' reward functions depend upon each other's actions in a predator-prey OpenAI Gym environment \cite{magym}. 
We solve the forward problem by specifying the nonlinear least-squares problem \eqref{opt: nl ls} in Julia \cite{bezanson2017julia} using the JuMP \cite{dunning2017jump} interface and the COIN-OR IPOPT \cite{wachter2006implementation} optimizer. The source code is publicly available at \url{https://github.com/vivianchen98/Inverse_MDPGame}.

\subsection{Baseline}
The baseline algorithm is a decoupled version of \Cref{alg: proj grad}, that is, it solves optimization \eqref{opt: best response} with the coupling parameter $C^{ij}=0$ for all players $i,j\in [p]$.
Dropping this parameter frees the baseline algorithm to solve an optimization for each player independently, similar to many existing multi-agent inverse reinforcement learning algorithms. 



\subsection{Algorithm Parameters}
For the projected-gradient method in \Cref{alg: proj grad}, we use a backtracking line search technique to fine-tune the step size in line~\ref{lst:line:b_update} and \ref{lst:line:C_update} based upon the Armijo (sufficient decrease) condition \cite{armijo1966minimization}. Each iteration starts with an initial step size $\alpha=1$, and the algorithm reduces the step size by half until it meets the sufficient decrease condition. Both algorithms terminate when the change in $\norm{y-\hat{y}}^2$ is below a given tolerance $\epsilon=0.005$. The maximum number of iterations $k_{\max}$ is $100$, and the discount factor $\gamma$ is $0.99$. We sample the values of the vector $b_{init}$ and the matrix $C_{init}$ from a random number generator given a seed. We run both algorithms from seed $1$ to $10$. 

\subsection{Predator-Prey Environment}
\label{subsec: predator_prey_env}
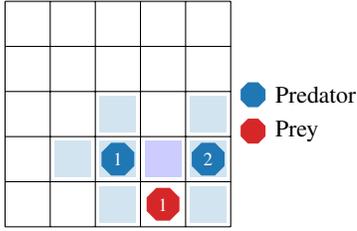
\begin{figure}
    \centering
    \begin{tikzpicture}[scale=0.6]
  \draw[step=1cm,color=black] (0,0) grid (5,5);

  \fill[predatorblue!20] (1.1,1.1) rectangle (1.9,1.9);
  \fill[predatorblue!20] (2.1,1.1) rectangle (2.9,1.9);
  \fill[predatorblue!20] (2.1,0.1) rectangle (2.9,0.9);
  \fill[predatorblue!20] (2.1,2.1) rectangle (2.9,2.9);

  \fill[blue!20] (3.1,1.1) rectangle (3.9,1.9);

  \fill[predatorblue!20] (4.1,0.1) rectangle (4.9,0.9);
  \fill[predatorblue!20] (4.1,1.1) rectangle (4.9,1.9);
  \fill[predatorblue!20] (4.1,2.1) rectangle (4.9,2.9);
  
  \node[fill=predatorblue, regular polygon, regular polygon sides=8, inner sep=2pt] at (2.5,1.5) {\scriptsize\textcolor{white}{1}};
  \node[fill=predatorblue, regular polygon, regular polygon sides=8, inner sep=2pt] at (4.5,1.5) {\scriptsize\textcolor{white}{2}};

  \node[fill=preyred, regular polygon, regular polygon sides=8, inner sep=2pt] at (3.5,0.5) {\scriptsize\textcolor{white}{1}};
    
    \matrix [right] at (current bounding box.east) {
      \node [fill=predatorblue, regular polygon, regular polygon sides=8,label=right:{\small Predator}] {}; \\
      \node [fill=preyred, regular polygon, regular polygon sides=8,label=right:{\small Prey}] {}; \\
    };
\end{tikzpicture}
    \caption{Two predators (blue) and one prey (red) moving in a 5x5 GridWorld. The light blue cells represent the catching region of the predators, and the light purple cell represents the overlapping of both predators' catching regions. This episode terminates when the prey is inside a light purple cell.}
    \label{fig:predator_prey_env}
\end{figure}

We consider a predator-prey environment from a collection of multi-agent environments based on OpenAI Gym \cite{magym}. As shown in \cref{fig:predator_prey_env}, two predators attempt to capture one randomly moving prey in a $5\times 5$ GridWorld. Each predator has observations of all players and the coordinates of the prey relative to itself and selects one of five actions: \texttt{left}, \texttt{right}, \texttt{up}, \texttt{down}, or \texttt{stop}. The prey is caught when it is within the catching region (light blue cells in \cref{fig:predator_prey_env}) of at least one predator. An episode terminates when the prey is caught by more than one predator (inside a light purple cell in \cref{fig:predator_prey_env}), resulting in a positive reward. For every new episode, the environment initializes the prey into random locations and the prey never moves voluntarily into the predators' neighborhood.
In this environment, only the two predators are controllable, but we collect the trajectories of all three players, including the prey, to solve the inverse game problem.

\subsection{Observed Dataset Collection}
We collect all players' trajectories as the observed interactions. Each trajectory is a sequence of states and actions until termination for the current episode. 
We train a policy using a multi-agent reinforcement learning algorithm \cite{sunehag2017value} and sample trajectories from this policy. The players in this policy exhibit uncertainties in their decision-making process that are difficult to articulate explicitly, much like humans. As a result, the data from these models can serve as a proxy for human datasets.

We process the collected trajectories from all three players by first pruning those shorter than the $50$th percentile of trajectory lengths and then capping the remaining trajectories to the same length. After processing, we attain $100$ useful trajectories of length $6$. We compute the collection of state-action frequencies for all three players $\hat{y}$ and approximate the initial state distributions and the transition probabilities for all players using the observed data.

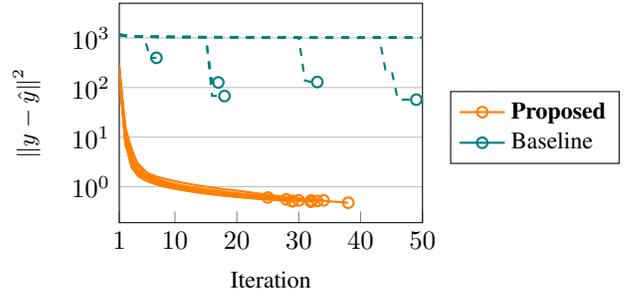
\begin{figure}
    \centering
    \begin{tikzpicture}
    \begin{semilogyaxis}[
            xlabel={{\small Iteration}},
            ylabel={{\small $\norm{y-\hat{y}}^2$}},
            xmin=1, xmax=50,
            ymin=0, ymax=5000,
            width=0.65\linewidth,
            height =4.5cm, 
            xtick={1,10,20,30,40,50},
            xtick pos=left,
            ytick={1,10,100,1000},
            ytick style={draw=none},
            legend style={at={(1.65,0.6)},anchor=north east, legend cell align={left}},
            ymajorgrids = true,
    ]

    \addlegendimage{orange,mark=o, thick}
    \addlegendentry{{\small \bf Proposed}}

    \addlegendimage{teal,mark=o, thick}
    \addlegendentry{{\small Baseline}}
    
    \addplot[only marks, mark=o, orange, thick]coordinates {(34,0.532066938000832) (32,0.4992389546) (30,0.5306287569) (28,0.5563287063) (33,0.5180440768) (29,0.5098950252) (38,0.4764974208) (32,0.5275647948) (29,0.5169438494) (25,0.6065868245)};
    
    \addplot[only marks, mark=o, teal, thick]coordinates {(7,398.3584228383) (79,1025.82921208205) (100,1026.258229) (49, 57.06564416) (55,1025.238772) (18, 67.55653781) (17,126.8949034) (83,1025.054204) (59,1024.884266) (33,129.5089324)};

    \addplot [mark=none, orange, thick] table [x=iter, y=seed1, col sep=comma]{psi_game.dat};
    \addplot [mark=none, orange, thick] table [x=iter, y=seed2, col sep=comma]{psi_game.dat};
    \addplot [mark=none, orange, thick] table [x=iter, y=seed3, col sep=comma]{psi_game.dat};
    \addplot [mark=none, orange, thick] table [x=iter, y=seed4, col sep=comma]{psi_game.dat};
    \addplot [mark=none, orange, thick] table [x=iter, y=seed5, col sep=comma]{psi_game.dat};
    \addplot [mark=none, orange, thick] table [x=iter, y=seed6, col sep=comma]{psi_game.dat};
    \addplot [mark=none, orange, thick] table [x=iter, y=seed7, col sep=comma]{psi_game.dat};
    \addplot [mark=none, orange, thick] table [x=iter, y=seed8, col sep=comma]{psi_game.dat};
    \addplot [mark=none, orange, thick] table [x=iter, y=seed9, col sep=comma]{psi_game.dat};
    \addplot [mark=none, orange, thick] table [x=iter, y=seed10, col sep=comma]{psi_game.dat};

    \addplot [mark=none, dashed, teal, thick] table [x=iter, y=seed1, col sep=comma]{psi_decoupled.dat};
    \addplot [mark=none, dashed, teal, thick] table [x=iter, y=seed2, col sep=comma]{psi_decoupled.dat};
    \addplot [mark=none, dashed, teal, thick] table [x=iter, y=seed3, col sep=comma]{psi_decoupled.dat};
    \addplot [mark=none, dashed, teal, thick] table [x=iter, y=seed4, col sep=comma]{psi_decoupled.dat};
    \addplot [mark=none, dashed, teal, thick] table [x=iter, y=seed5, col sep=comma]{psi_decoupled.dat};
    \addplot [mark=none, dashed, teal, thick] table [x=iter, y=seed6, col sep=comma]{psi_decoupled.dat};
    \addplot [mark=none, dashed, teal, thick] table [x=iter, y=seed7, col sep=comma]{psi_decoupled.dat};
    \addplot [mark=none, dashed, teal, thick] table [x=iter, y=seed8, col sep=comma]{psi_decoupled.dat};
    \addplot [mark=none, dashed, teal, thick] table [x=iter, y=seed9, col sep=comma]{psi_decoupled.dat};
    \addplot [mark=none, dashed, teal, thick] table [x=iter, y=seed10, col sep=comma]{psi_decoupled.dat};
    \end{semilogyaxis}
    \end{tikzpicture}
    \caption{Algorithms for the inverse game problem with termination marked in circles (the lower the better).}
    \label{fig:psi_convergence}
\end{figure}

\begin{figure}[h]
\begin{tabular}{p{0.1cm}cc}
    & {\small \bf \centering Proposed} & {\small Baseline}\\
    
    \rotatebox[origin=c]{90}{{\small Predator} \tiny\octagoned{1}{predatorblue}{white}}  & \includegraphics[width=0.44\linewidth,valign=m]{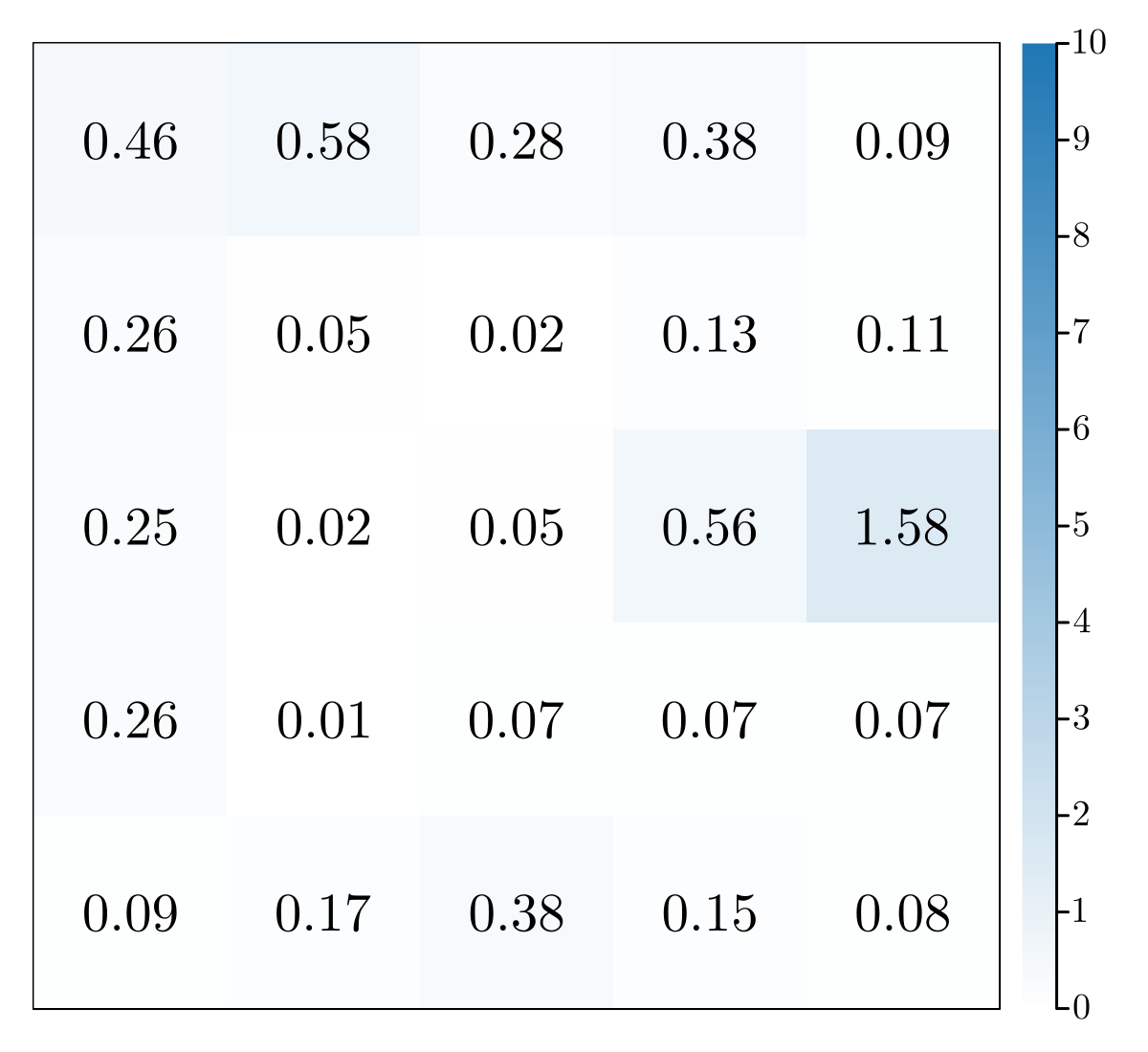} & \includegraphics[width=.44\linewidth,valign=m]{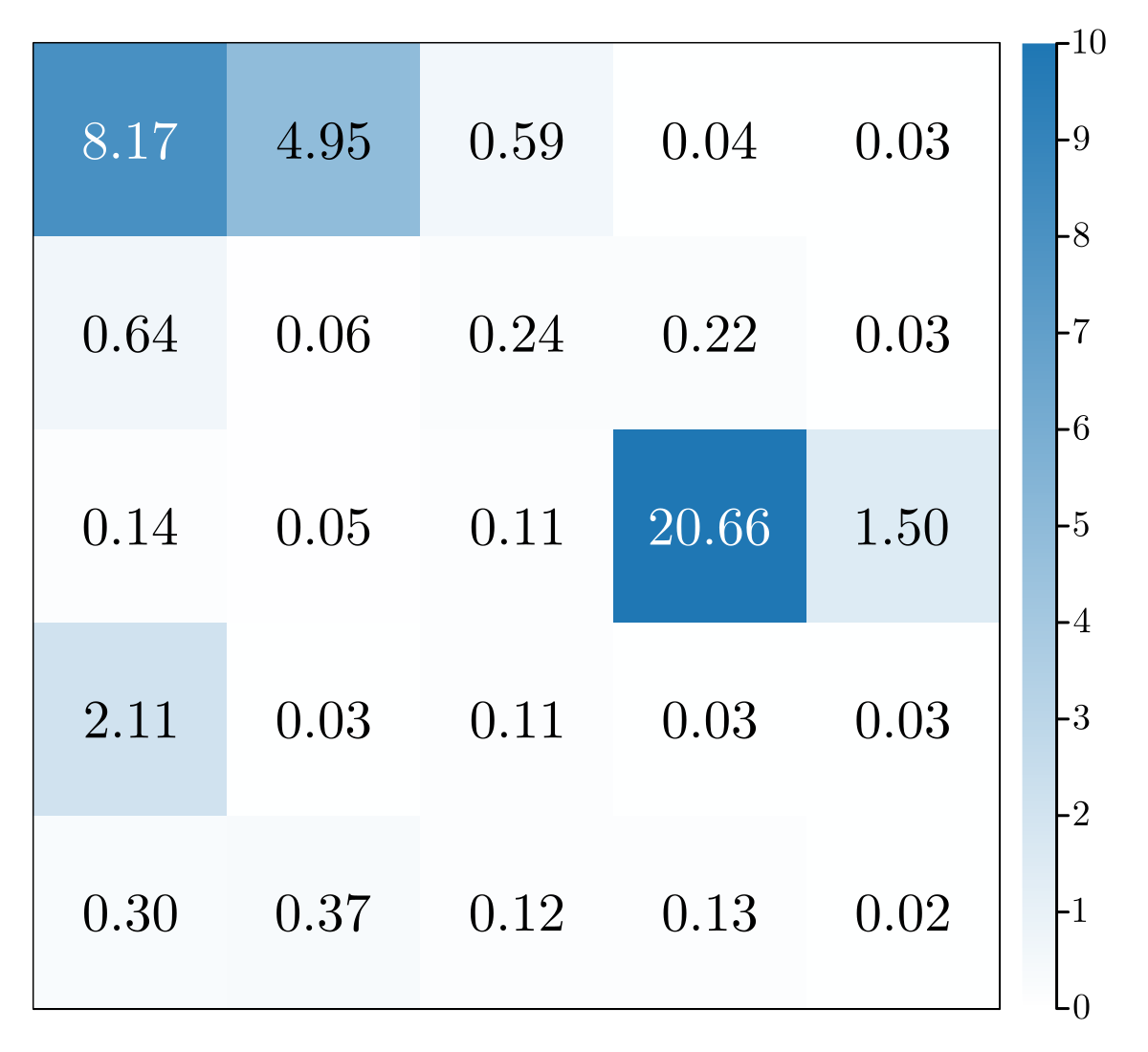}\\
    
    \rotatebox[origin=c]{90}{{\small Predator} \tiny\octagoned{2}{predatorblue}{white}} & \includegraphics[width=.44\linewidth,valign=m]{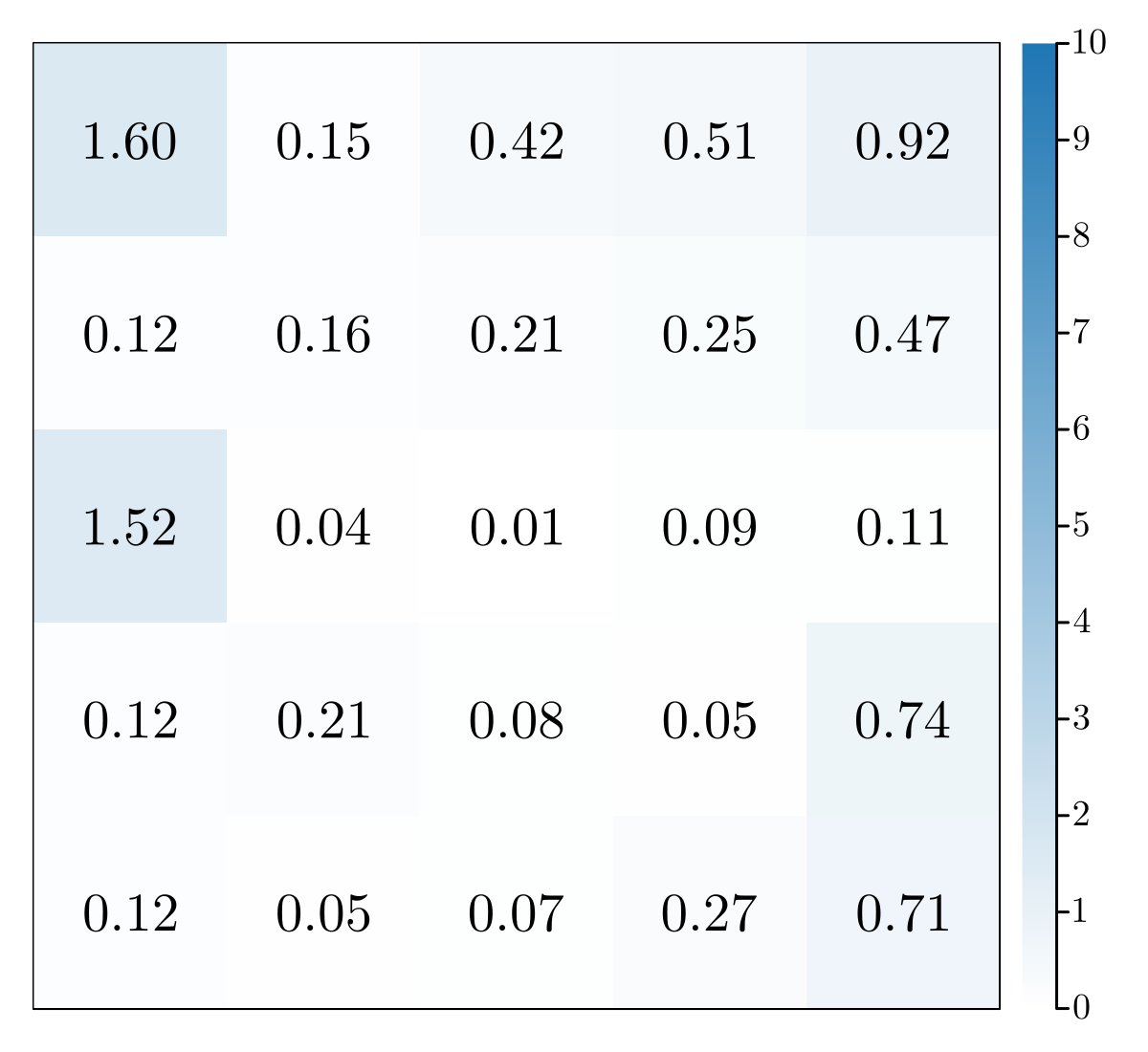} & \includegraphics[width=.44\linewidth,valign=m]{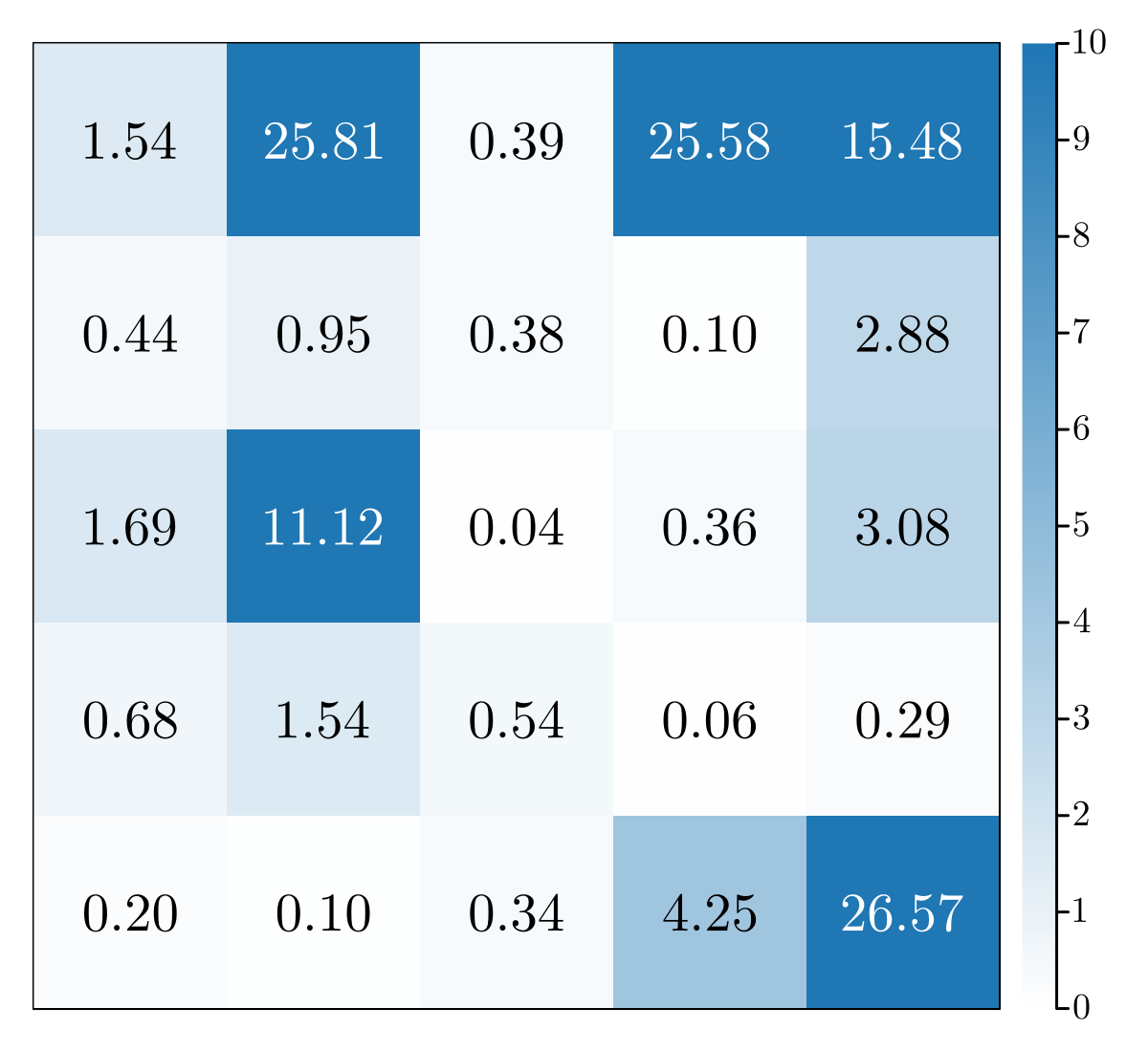}\\
    
    \rotatebox[origin=c]{90}{{\small Prey} \tiny\octagoned{1}{preyred}{white}} & \includegraphics[width=.44\linewidth,valign=m]{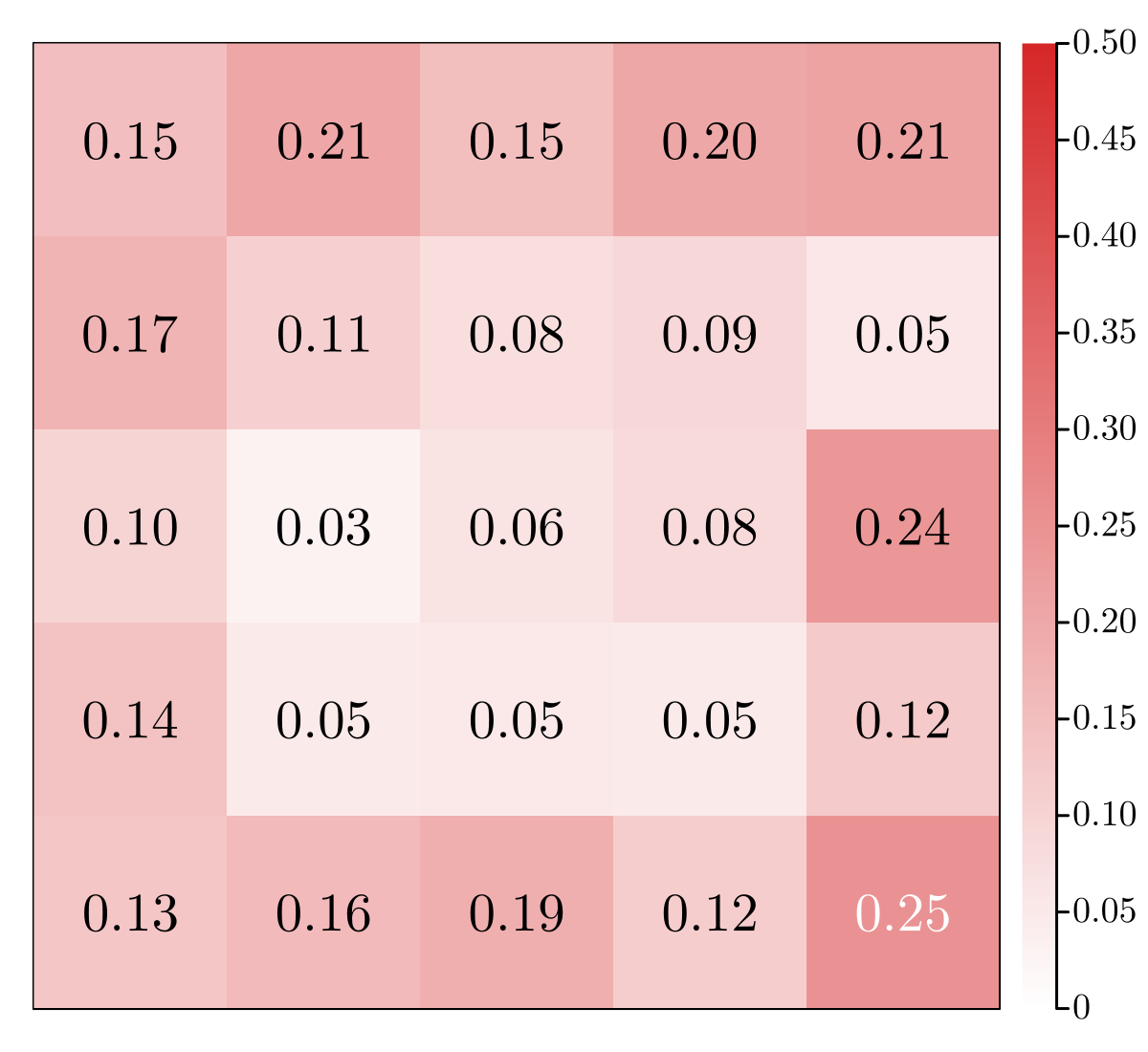} & \includegraphics[width=.44\linewidth,valign=m]{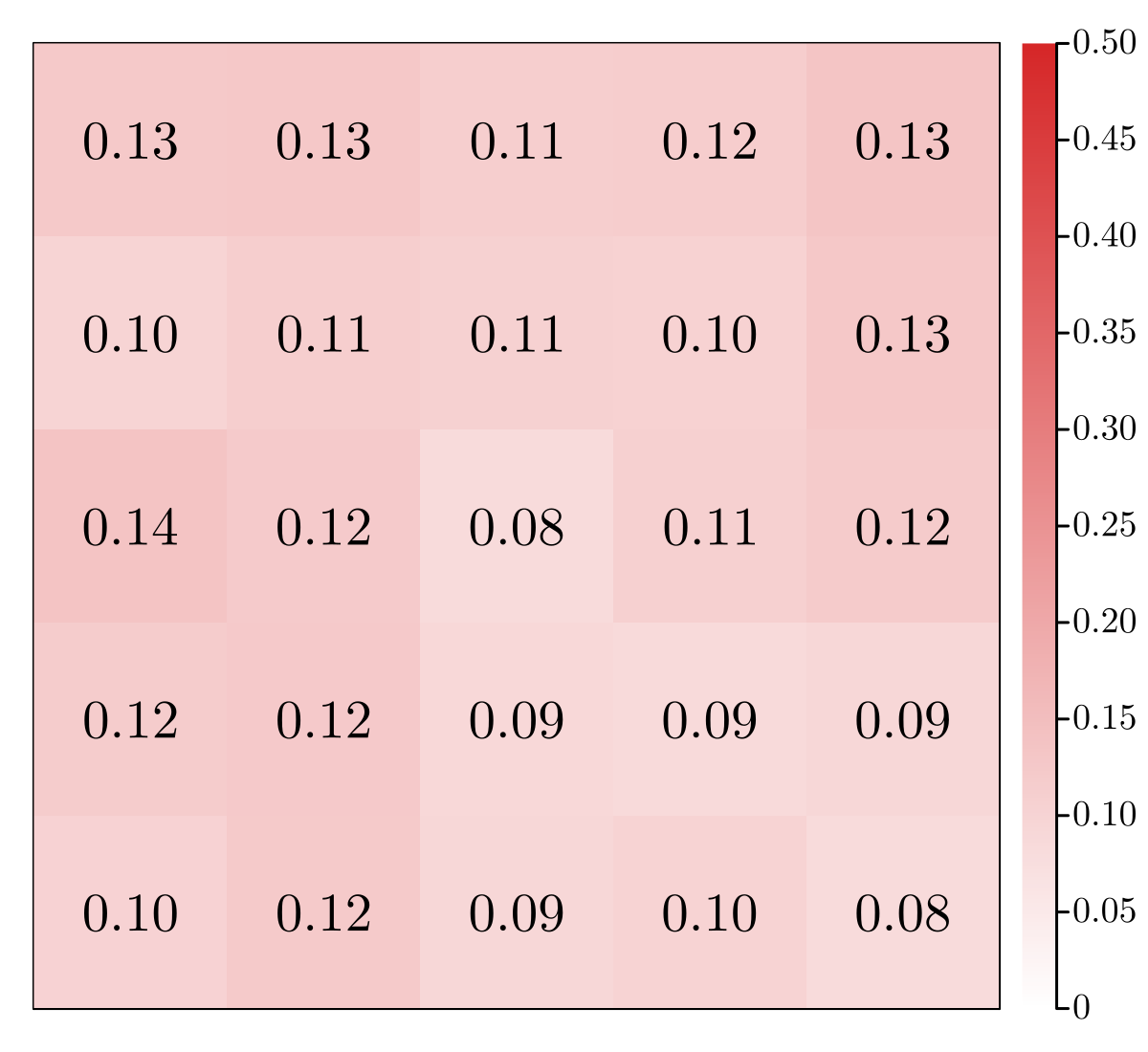}\\
\end{tabular}
\caption{Heatmaps showing Kullback–Leibler divergence $D_{\text{KL}}(\Pi^i_s \parallel \hat{\Pi}^i_s)$ between the equilibrium policy $\Pi^i_s$ and the observed policy $\hat{\Pi}^i_s$ at each state $s$ in the GridWorld for all three players. All values rounded to two decimal places, the smaller (lighter color) the better.}
\label{fig:kl_gridworld}
\end{figure}

\subsection{Numerical Results}
We demonstrate \cref{alg: proj grad} and the baseline algorithm on the predator-prey environment introduced in \cref{subsec: predator_prey_env}. \cref{fig:psi_convergence} shows $\norm{y-\hat{y}}^2$, the squared norm of the difference between the computed state-action frequency $y$ and the observed state-action frequency matrices $\hat{y}$, with respect to the number of iterations. 
Results show the proposed algorithm ends in $31.0\pm 3.6$ iterations, while the baseline algorithm takes $50.0\pm 31.3$ iterations to terminate. As shown in \cref{fig:psi_convergence}, the final iterate of the proposed algorithm has $\norm{y-\hat{y}}^2$ below $1$, while the baseline algorithm on average terminates with a value above $590.7$. This comparison highlights the importance of accounting for the coupling between the players.

Given a state-action frequency matrices $y^i$ for player $i$, we compute the corresponding policies $\Pi^i_{sa}$ by \eqref{eqn: Y2Pi}, and denote the equilibrium policy at each state $s$ as a probability distribution
$$\Pi^i_s = \begin{bmatrix}\Pi^i_{s,\texttt{left}} & \Pi^i_{s,\texttt{right}} & \Pi^i_{s,\texttt{up}} & \Pi^i_{s,\texttt{down}} & \Pi^i_{s,\texttt{stop}}\end{bmatrix}.$$
We report the Kullback–Leibler divergence $D_{\text{KL}}(\Pi^i_s \parallel \hat{\Pi}^i_s)$ between the equilibrium policy $\Pi^i_s$, computed using the proposed and the baseline algorithms, and the observed policy $\hat{\Pi}^i_s$ at each state $s$ for all three players. \cref{fig:kl_gridworld} shows that \cref{alg: proj grad} arrives at an equilibrium policy closer to the observed policy than the baseline algorithm does.

\section{Conclusion \& Future Work}
\label{sec: conclusion}

We proposed soft-Bellman equilibrium as a novel solution concept in affine Markov games, a class of Markov games where an affine reward function couples the players' actions, to capture interactions of boundedly rational players in stochastic, dynamic environments. We provided conditions for the existence and uniqueness of the soft-Bellman equilibrium. We solved the forward problem of computing such an equilibrium for a given affine Markov game and proposed an algorithm to tackle the inverse game problem of inferring players' reward parameters from observed interactions.

Future work should validate the effectiveness of the proposed algorithms using human datasets instead of synthetic datasets. For example, the INTERACTION dataset contains human driving trajectories in interactive traffic scenes \cite{zhan2019interaction}, and can serve as a more representative dataset for the inverse game problem.



\section*{ACKNOWLEDGMENT} The authors would like to thank Negar Mehr and Xiao Xiang for their constructive feedback.

\printbibliography

\end{document}